\newtheorem{lemma}{Lemma}
\newtheorem{theorem}{Theorem}
\newenvironment{oldthm}[1]{\par\noindent{\bf Theorem #1:} \em \noindent}{\par}
\newenvironment{oldlem}[1]{\par\noindent{\bf Lemma #1:}
  \em \noindent}{\par}
\newenvironment{oldcor}[1]{\par\noindent{\bf Corollary #1:} \em \noindent}{\par}
\newenvironment{oldpro}[1]{\par\noindent{\bf Proposition #1:} \em \noindent}{\par}
\newcommand{\ethm}{\end{theorem}}
\newcommand{\commentout}[1]{}
\newcommand{\othm}[1]{\begin{oldthm}{\ref{#1}}}
\newcommand{\eothm}{\end{oldthm} \medskip}
\newcommand{\olem}[1]{\begin{oldlem}{\ref{#1}}}
\newcommand{\eolem}{\end{oldlem} \medskip}
\newcommand{\ocor}[1]{\begin{oldcor}{\ref{#1}}}
\newcommand{\eocor}{\end{oldcor} \medskip}
\newcommand{\opro}[1]{\begin{oldpro}{\ref{#1}}}
\newcommand{\eopro}{\end{oldpro} \medskip}
\begin{document}

%\mainmatter  % start of an individual contribution

% first the title is needed
\title{Energy optimization of memory intensive parallel workloads\thanks{Regular Paper}}
\date{Queen's University of Belfast, Belfast, UK}
%\thanks{Regular Paper}

% a short form should be given in case it is too long for the running head
%\titlerunning{Energy optimization of memory intensive parallel workloads}

% the name(s) of the author(s) follow(s) next
%
% NB: Chinese authors should write their first names(s) in front of
% their surnames. This ensures that the names appear correctly in
% the running heads and the author index.
%
\author{Chhaya Trehan \thanks{\mailt (Corresponding Author)} \and Hans Vandierendonck\thanks{\mailv} \and Georgios Karakonstantis\thanks{\mailk}%
%\thanks{Please note that the LNCS Editorial assumes that all authors have used
%the western naming convention, with given names preceding surnames. This determines
%the structure of the names in the running heads and the author index.}%
\and Dimitrios S. Nikolopoulos\thanks{\mailn}}
%
%\authorrunning{Trehan et al}
% (feature abused for this document to repeat the title also on left hand pages)

% the affiliations are given next; don't give your e-mail address
% unless you accept that it will be published.
%\thanks{Queen's University of Belfast, Belfast, UK\\
%\mailt, \mailv\\
%\mailk, \mailn\\
%%\mailt\\
%%\url{http://www.springer.com/lncs}
%}

%
% NB: a more complex sample for affiliations and the mapping to the
% corresponding authors can be found in the file "llncs.dem"
% (search for the string "\mainmatter" where a contribution starts).
% "llncs.dem" accompanies the document class "llncs.cls".
%

%\toctitle{Energy optimization of memory intensive parallel workloads}
%\tocauthor{Shay Kutten and Chhaya Trehan}

\begin{titlepage}
\maketitle

\begin{abstract}
Energy consumption is an important concern in modern multicore processors. The energy consumed during the execution of an application can be minimized by tuning the hardware state utilizing knobs such as frequency, voltage etc. The existing theoretical work on energy minimization using Global DVFS (Dynamic Voltage and Frequency Scaling), despite being thorough, ignores the energy consumed by the CPU on memory accesses and the dynamic energy consumed by the idle cores. This article presents an analytical model for the performance and the overall energy consumed  by the CPU chip on CPU instructions as well as the memory accesses without ignoring the dynamic energy consumed by the idle cores. We present an analytical framework around our energy-performance model to predict the operating frequencies for global DVFS that minimize the overall CPU energy consumption within a performance budget. Finally, we suggest a scheduling criteria for energy aware scheduling of memory intensive parallel applications.
\end{abstract}
%\keywords{Energy Optimization; Global DVFS; Convvex Programming; Memory Intensive; Parallel Applications; Multicore}
\end{titlepage}

\section{Introduction}
While Silicon is available in abundance to build processors, the energy required to power them is not. Energy consumption and performance turn out to be the two most important and contradicting design criteria for the modern multicore processors~\cite{EnergyCost1, EnergyCost2, EnergyCost3}. The practice of dealing with the two contradicting goals by optimizing one while imposing a threshold on the other leads to two flavors of energy-performance optimization called the \emph{laptop problem} and the \emph{server problem}. In the laptop problem, the goal is to maximize the performance given a fixed energy budget and in the server problem, the goal is to minimize the energy consumption given a fixed performance budget~\cite{ServerLaptopProblem, ServerProblem}. We deal with the \emph{server problem} in this article.

 The energy consumed by a CMP(Chip Multi Processor) is an increasing function of the operating voltage and frequency of the chip and can be reduced by reducing one of them. Dynamic Voltage and Frequency Scaling (DVFS) is thus a popular energy minimization technique for multicore platforms.
 %DVFS for multicore systems on the other hand is still an active area of theoretical research.~\cite{ConvexAndScheduling}. 
% Reducing the frequency actually also reduces the voltage and thus leads to a substantial reduction in energy. Dynamic Voltage and Frequency Scaling (DVFS) is thus a popular energy minimization technique. DVFS on a single core system is a relatively simpler and well researched problem.  
%Most of the theoretical work on the energy-delay trade off deals with the \emph{local} dynamic voltage and frequency scaling~\cite{ConvexAndScheduling}, where frequency can be set separately for each core. We study the problem of energy minimization under a performance constraint using global DVFS where all the cores on the chip are set to run on the same frequency. While local DVFS has more freedom in choosing clock frequencies and can therefore save more energy, it is not easy to implement. Global DVFS being easier and cheaper to implement leads to much simpler and practical algorithms for choosing the frequencies for energy optimization problem.

While the raison d'etre of a multicore platform is to maximize the performance by maximizing the parallelism, the inherent parallelism of a workload is not easy to determine. It is thus becoming increasingly common to divide an application into a set of parallel tasks with precedence constraints so as to make the possible parallelism explicit. We consider a \emph{task dependency graph}~\cite{TaskGraphs1, TaskGraphs2} as our model for the workload. Often when we think of parallelism, we think of performance gains and we tend to ignore its ramifications on energy consumption. The fact that we can gain performance by increasing parallelism allows one to save energy by reducing frequency without violating a performance constraint. Gerards et al formalized the problem of energy consumption for task graphs in their recent article~\cite{ConvexAndScheduling} and studied the interplay between global DVFS and scheduling of parallel applications to minimize the CPU energy consumption. An interesting find of their work is that using a single clock frequency during the execution of an application does not lead to optimal energy consumption and they present an approach for varying the frequency during execution to minimize energy. The frequency is varied according to the variations in the amount of parallelism and a separate frequency is assigned to each number of active cores (parallelism).

The analytical model of~\cite{ConvexAndScheduling} for energy consumption and performance, however, completely ignores the energy consumed by the CPU while it waits for data accesses to the main memory. Since it does not account for the time overhead of the access latency of memory, it can lead to an imprecise estimate of the slack between the time to completion and the given performance budget. The CPU energy optimization techniques that save energy by decreasing the operating frequencies of the cores at the cost of an increased delay need to be tuned to account for the memory access latencies. Precisely accounting for the memory access delays of the application helps exploit the slack and avoids an over optimistic selection of operating frequencies. Another assumption in~\cite{ConvexAndScheduling} is that the frequency of the idle cores can be brought down to zero by techniques like clock gating. This is not always possible in reality, the idle cores can't be completely shut down and  do consume some  dynamic energy. In this article, we present a new model for the energy and performance of multicore systems that accounts for the energy consumed by the CMP while waiting for memory accesses in addition to the energy consumed on CPU instructions without ignoring the dynamic energy consumed by the idle cores. We provide an analytical framework around our energy-performance model to predict the operating frequencies for global DVFS to minimize the overall CPU energy consumption of a given application.

\paragraph{Related Work:}
A common approach to reduce the energy consumption of an application is to reduce the operating frequency of the cores~\cite{DVFS, DVFS2, DVFS3} which incurs a cost in terms of increased execution time. Most of the theoretical work on the energy-delay trade off deals with the \emph{local} DVFS~\cite{Local1, Local2, Local3}, where every core's voltage and frequency can be set separately. We study the problem of energy minimization under a performance constraint using global DVFS where the voltage and frequency are set for the entire chip. While local DVFS has more freedom in choosing clock frequencies and can therefore save more energy, it is not easy to implement~\cite{GlobalEasy}. Global DVFS being easier and cheaper to implement leads to much simpler and practical algorithms for choosing the frequencies for energy optimization problem. The relationship of parallelism with energy and performance was first studied by Sangyeun and Melhem in~\cite{ParallelizationEnergy}.  In their recent paper, Gerards et al~\cite{ConvexAndScheduling} show that using a single clock frequency during the execution of a \emph{parallel} application with precedence constraints does not lead to optimal energy consumption and present an approach for varying the frequency during execution to minimize energy. Li in his pioneering work~\cite{LiSchedulingAndEnergy} presents heuristic algorithms for energy optimization that treat scheduling and frequency selection as two independent subtasks performed one after the other. Further, Gerards et al~\cite{ConvexAndScheduling} show that the tasks of determining a schedule and frequencies that together minimize the energy consumption should not be considered separately and study the relation between the two. They define a scheduling criterion for energy optimization and show how to determine frequencies that minimize energy consumption. They characterize a schedule in terms of \emph{parallelism}, which gives for each number of cores the number of clock cycles 
for which exactly that many cores are active. Given a schedule, this model abstracts from the tasks and their precedence constraints and determines a clock frequency for each "number of active cores''.

\paragraph{Outline of the paper:}
The rest of the paper is organized as follows. Section~\ref{Model} describes the system, the application and the power model under consideration. Section~\ref{EnergyOptimizationProblem} formulates the energy optimization problem as a constrained convex optimization problem. Section~\ref{AffectMemoryAccesses} describes how memory accesses of a given application affect the optimal frequencies for energy optimization. In Section~\ref{SelectionScheduling}, we give analytical formulas for optimal frequencies for memory intensive workloads. Finally, Section~\ref{Conclusion} concludes the paper along with some research questions for future.
\section{Model}\label{Model}
Our application and system model is similar to the one presented in~\cite{ConvexAndScheduling}. 
%We refer the reader to~\cite{ConvexAndScheduling} for a detailed treatment of the model. 
The model presented here differs from that of~\cite{ConvexAndScheduling} in considering the memory accesses as being part of workload and the treatment of idle cores.
\paragraph{Application:}
We consider an application running on a multicore processor. The application itself consists of a set $T$ of $N$ tasks, denoted by ${T_1,.....T_N}$. We consider an overall deadline $t_{budget}$ for the entire application. A task $T_i$ is characterized by two attributes, namely: the \emph{compute workload:} $cw_i$ and the \emph{data workload} . The compute work load is the number of clock cycles required to perform the computations of the task. The data workload is the number of memory accesses a task has to make during its execution. We assume an application wide parameter called \emph{data to CPU} quotient $d$ which is the ratio of data to compute workloads of the application. It can be viewed as the number of memory accesses per CPU instruction cycle of the application. For a task $T_i$ with compute workload of $cw_i$, its data workload can be inferred as the product of $cw_i$ and $d$. We assume that the memory accesses of a task are distributed uniformly throughout the task.
%\begin{figure}[h!]
%  \begin{center}
%   \includegraphics[scale=0.25]{Figures/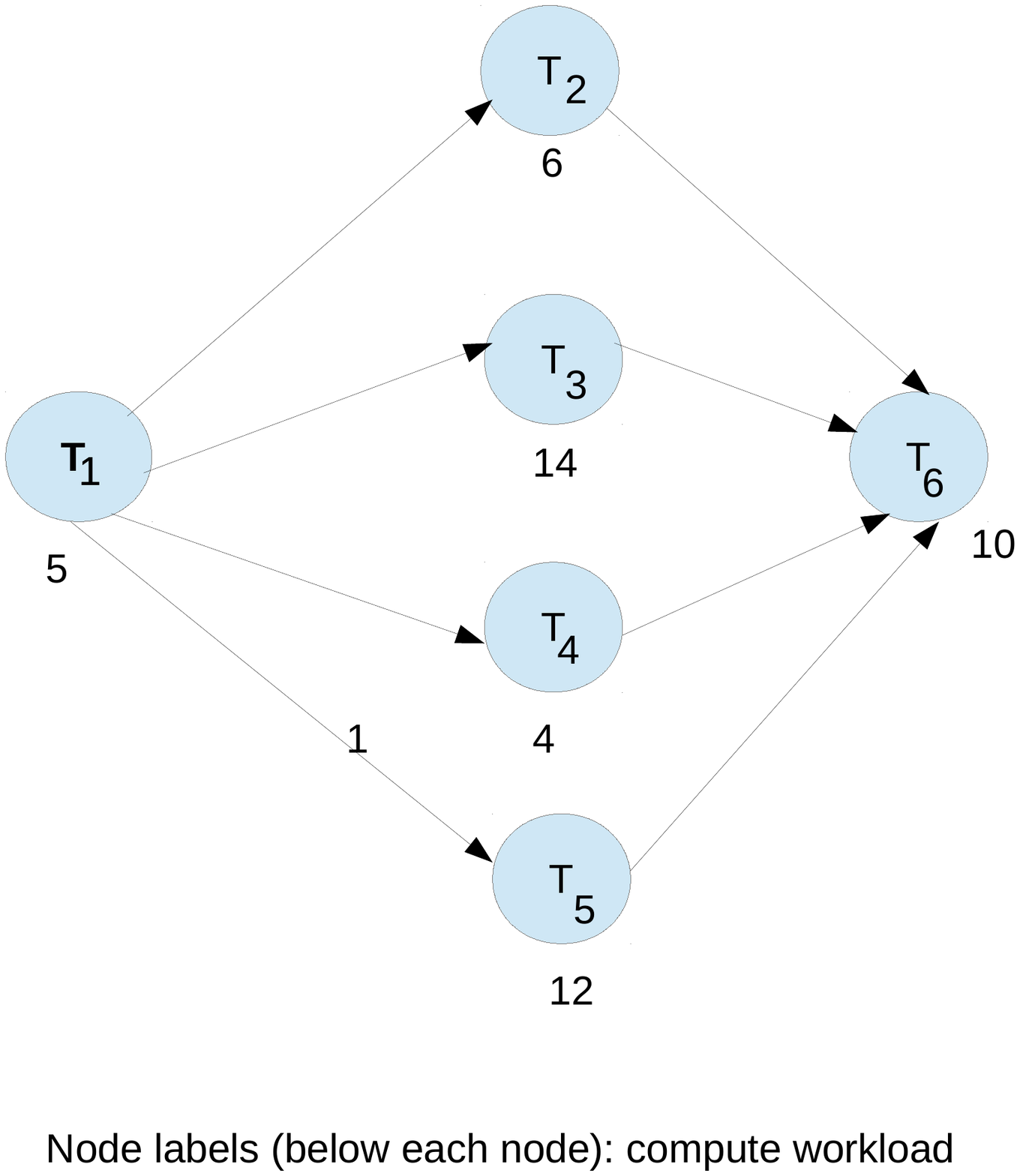}
%  \end{center}
%  \vspace*{-15mm}
%  \caption{A task dependency graph}
%   \label{figure:TaskDependenyGraph}
% \end{figure}
 The application can be depicted as a labeled DAG (Directed Acyclic Graph) where nodes represent the tasks and the (\emph{Directed}) edges represent the precedence constraints (Figure~\ref{figure:TaskDependenyGraph} in Appendix). Each node carries a label depicting the CPU workload $cw_i$ of the associated task. 
 %As mentioned earlier, the data workload of a task can be easily computed from the CPU workload and the parameter $d$ of the application.
 
\paragraph{Computing Platform:} The Application runs on a Chip Multiprocessor system with $M > 1$ homogeneous processing cores. All the cores have similar capabilities and run at the same frequency. Instead of using a single frequency throughout the application, we assume that the frequency can be changed at any time. We assume a frequency function $\phi:R^+ \to R^+$ which maps a given point in time to the frequency to be used at that time. Unlike~\cite{ConvexAndScheduling}, we do not assume that the frequency of inactive cores can be brought down to zero using clock gating. Therefore, one can not ignore the dynamic power consumption of inactive cores. We instead assume that the inactive cores run at the same frequency as the active cores, but their average activity factor is much less compared to the active cores. 
%The equation governing the energy consumption of active and inactive cores is discussed in detail in the next Section.

\paragraph{Power:}
As is common in literature, we consider two components of power, the \emph{dynamic power} and the \emph{static power}. Assuming $f$ is the frequency of all the cores at some time $t$, the dynamic power of an active core at time $t$ can be expressed as an increasing function of frequency as follows:
\begin{displaymath}
 p_{DynamicActive}(f) = c1f^{\alpha}
\end{displaymath}
The constant $c1 > 0$ is a characteristic of the computing platform and the exponent $\alpha$ is a constant($\geq 2$).
At any given point in time, an inactive core consumes relatively less dynamic power owing to its reduced activity factor. We model this difference in dynamic power of active and inactive cores by assuming that the constant $c1$ for inactive cores is less than the $c1$ for active cores.
Assuming $c1'$ to be the constant for inactive cores such that the ratio $K = \frac{c1'}{c1} < 1$, the dynamic power of an inactive core can be expressed as:
\begin{displaymath}
 p_{DynamicInactive}(f) = c1'f{\alpha}
\end{displaymath}
The static power which is a function of voltage can also be expressed as an affine function of frequency (since voltage and frequency are almost linearly related) as follows:
\begin{displaymath}
 p_{Static}(f) = c2f + c3
\end{displaymath}
At a given point in time, with $m$ active cores running at an operating frequency of $f$, the total power of the processor chip can be expressed as:
\begin{displaymath}
 p_m(f) = m c1f^{\alpha} + (M-m)c1'f^{\alpha} +  p_{Static}
\end{displaymath}
where $M$ is the total number of cores on the chip.
Expressing $c1'$ as $Kc1$, the equation for the total power with $m$ cores active at frequency $f$ is:
\begin{equation}\label{PowerEquation}
 p_m(f) = [m + k(M-m)]c1f^{\alpha} + c2f + c3
\end{equation}
This is a convex and increasing function in $f$. From this point on, we will denote $[m + k(M-m)]$ as $m'$ for the sake of brevity. 
%The energy consumed by the CPU during an interval $(t1,t2)$ during which $m$ cores are active can be calculated by taking an integral of $p_m(f)$ over the interval.
Dividing equation~\ref{PowerEquation} on both sides by $f$ gives energy per CPU cycle which we will denote as  $\bar{p}_m$ henceforth

\begin{equation}\label{EnergyPerCycleEquation}
 \bar{p}_m(f) = m'c1f^{\alpha-1} + c2 + \frac{c3}{f}
\end{equation}

In~\cite{ConvexAndScheduling}, Gerards and others use the convex nature of the power function to prove that for an interval during which a constant number of cores are active, a constant frequency is optimal in terms of energy consumption. 
%In our model, the power function $p_m$ (equation:~\ref{PowerEquation}) is convex, but the composition of an interval of time $(t1,t2)$ is different. 

Before we go into the details of selecting the optimal frequencies in our model, we take a short diversion to understand what an interval $(t1,t2)$ in our model looks like and how the presence of  memory accesses during an interval change the dynamics of energy optimization. In any interval during the execution in our model, all the active cores are performing some memory accesses uniformly interleaved with the CPU instruction cycles. Therefore, not all of the CPU cycles produced during such an interval can be counted towards the work done (instructions) by the CPU. Moreover the time spent on memory accesses is independent of frequency whereas the time spent on executing the instructions can be increased (decreased) by decreasing (increasing) the frequency.  DVFS schemes for energy optimization exploit this ability to stretch an interval by decreasing the frequency to minimize energy consumption at the cost of increased delays. An interval with memory accesses can be thought of as composed of many springs with some rigid material placed between them. Applying a force (a change in frequency) can only compress or decompress the springs and the rigid material (memory accesses) does not yield at all to the changes in frequency. Only a portion of interval containing instruction cycles and memory accesses can be stretched by decreasing the frequency thus leading to a lesser potential for reduction in energy by decreasing the frequency. Coming back to the question of the optimal frequencies for an interval in our model during which a fixed number of cores are active, one can divide such an interval into many CPU only intervals separated by memory accesses stacked between them. Applying Lemma 1 of~\cite{ConvexAndScheduling} on each such interval, we deduce that we can use the same constant clock frequency during each such CPU only intervals. What we are left to decide is the frequencies to be used during memory accesses. We may wish to bring the frequency further down during these portions of the interval to get some energy savings. But the assumption that the memory accesses are uniformly interleaved throughout does not leave much room for reduction as the overhead of changing the frequency uniformly throughout the interval can offset the potential energy savings. We therefore stick to the idea of using a constant frequency for an interval during which a fixed number of cores are active.
\paragraph{Parallelism and Energy-Performance model:}
 The overall energy consumption of an application can be expressed in terms of the amount of parallelism. In interest of brevity, we refer the reader to go through~\cite{ConvexAndScheduling} to fully appreciate the concept of power modeling in terms of parallelism. For an application with $N$ tasks running on a processor with $M$ cores, its amount of parallelism for a given schedule can be defined formally  as a vector $[w_1 , w_2, ...w_m ...w_M ]$, where $w_m$ is the total number of CPU cycles for which exactly $m$ cores are active. Using the idea that a constant frequency for a fixed number of cores (parallelism) leads to an optimal energy consumption,the task of global DVFS for energy optimization is reduced to finding a vector $f = [f_1 , f_2 , ...f_m ...f_M ]$ of frequencies where $f_m$ is the optimal frequency to be used when $m$ cores are active. Energy consumed when $m$ cores are active can be expressed as the product of energy per cycle $\bar{p}_m$ from equation\ref{EnergyPerCycleEquation} and $w_m$.
 Thus the total energy consumption of the application without considering the memory accesses can be expressed as:
  \begin{equation}\label{EnergyNoMemory}
  E(f_1, f_2,......f_M) = \sum_{m=1}^{M}[\bar{p}_m(f_m)w_m]
 \end{equation} 
For a given amount of parallelism $w_m$ , $w_md$ accesses to memory are made, where $d$ is the application wide data to CPU workload ratio. The CPU keeps clocking at a frequency $f_m$ for the duration of these $w_md$ memory accesses. If $t_a$ is the latency of memory accesses, $(w_m d)t_a$ is the duration for which the CPU waits for memory accesses. The additional cycles expended per core on memory accesses for $w_m$ is thus $w_mdt_af_m$ . Replacing $w_m$ with $w_m +w_m d_taf_m$ in the energy equation~\ref{EnergyNoMemory} leads to a new energy equation that accounts for the CPU energy consumed not only on the actual CPU work done but also the addition clock cycles expended on waiting for the memory accesses. 

\begin{equation}\label{EnergyWithMemory}
  E_{total}(f_1, f_2,......f_M) = \sum_{m=1}^{M}[\bar{p}_m(f_m)(w_m + w_m dt_af_m)]
 \end{equation} 
  The time to completion of an application for a given schedule can also be expressed in terms of parallelism. The time taken when considering memory accesses has a frequency dependent and a constant component. The constant component of the time to completion is the memory overhead of the application. The time to completion in terms of parallelism is:
  \begin{equation}\label{TimeToCompletion}
  t_{completion}(f_1, f_2,......f_M) = \sum_{m=1}^{M}\frac{w_m}{f_m} +  \sum_{m=1}^{M}w_mdt_a
 \end{equation}
 
% The makespan $S$ of an application for a given schedule is the number of cycles required to run the application. The memory overhead of time to completion, $\sum_{m=1}^{M}w_mdt_a$ in equation~\ref{TimeToCompletion} can be expressed in terms of makespan as $Sdt_a$.
% 
 \section{Energy Optimization}\label{EnergyOptimizationProblem}
Given an application and its schedule, the problem of energy optimization under a performance constraint can be formulated as one of finding an optimal set of frequencies, $f = [f_1 , f_2 , ...f_m ...f_M ]$  corresponding to the parallelism, $w =[w_1 , w_2, ...w_m ...w_M ]$. Denoting as $t_{budget}$, the deadline or the performance constraint of the application, the problem of energy minimization can be expressed as: 
\begin{equation}\label{OptimizationFirst}
\begin{aligned}
   &\underset{f_1, f_2, .....f_M}{\text{minimize}} \sum_{m=1}^{M}[\bar{p}_m(f_m)(w_m + w_m dt_af_m)] \\
   & \text{subject to} \quad \sum_{m=1}^{M}\frac{w_m}{f_m} + \sum_{m=1}^{M}w_mdt_a \leq t_{budget} \\
\end{aligned}
\end{equation}
Substituting the energy per cycle function $\bar{p}_m$ in equation~\ref{OptimizationFirst} with its expansion in equation~\ref{EnergyPerCycleEquation}, we get:
\begin{equation}\label{TotalEnergyFrequecy}
\begin{aligned}
   &\underset{f_1, f_2, .....f_M}{\text{minimize}} \sum_{m=1}^{M}[m'c1w_mdt_af_m^{\alpha} + m'c1w_mf_m^{\alpha-1}
 &+ c2w_mdt_af_m + c3\frac{w_m}{f_m} +c2w_m + c3w_mdt_a]\\
 & \text{subject to} \\
 &\qquad  \sum_{m=1}^{M}\frac{w_m}{f_m} + \sum_{m=1}^{M}w_mdt_a \leq t_{budget} \\
\end{aligned}
 \end{equation}
 Note that the decision variable $f = [f_1, f_2, .....f_M]$ can only take positive values, i.e. $f \in R_{+}^M$, thus making both the objective function and the constraint of the optimization problem(equation~\ref{TotalEnergyFrequecy}) convex~\cite{Boyd}. The solution to this problem is only a matter of typing in a few lines of code in any convex optimization solver.
% The overall CPU energy optimization problem as formulated in~\ref{TotalEnergyFrequecy} is a convex optimization problem which makes it fairly easy to solve. 
% This problem is especially easier since it falls under the category of separable convex optimization problems where both the objective function and the constraint can be expressed as the sum of functions of the individual decision variables. 
%
 \section{Memory accesses and the optimal frequencies}\label{AffectMemoryAccesses}
 The main goal of this work is to study the effect of memory accesses on CPU energy consumption and how does their presence alter the optimal frequencies. 
 %Gerards et al show in~\cite{ConvexAndScheduling} that instead of using a single frequency throughout the execution, one can save energy by varying the frequency in accordance with the variations in the parallelism (the number of active cores) of the schedule.
 Gerards et al show in~\cite{ConvexAndScheduling} that for any given number of active cores $m$, the frequency $f_m$ is inversely proportional to $\alpha \sqrt{m}$. In this section, we will investigate how do the optimal frequencies relate to the memory intensity (data to CPU workload ratio, $d$) of an application and whether and how the relationship between optimal frequencies and the number of active cores change in the presence of memory accesses. Recall that in addition to  accounting for the energy consumption on memory accesses, we also account for the dynamic energy consumed by idle cores in our model.
 
 \begin{lemma}\label{UnconstrainedOptimization}
  On a given hardware platform the unconstrained minimizer $[f1,f2,....f_M]$ of energy is same for all the applications with a fixed data to CPU workload ratio.
 \end{lemma}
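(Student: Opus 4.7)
The plan is to show that the first-order optimality conditions for the unconstrained minimization of $E_{total}$ decouple across the frequencies $f_m$ and, after canceling a common factor of $w_m$, depend only on $m$, the hardware parameters $c_1, c_2, c_3, \alpha, t_a$, and the application-wide ratio $d$; the specific parallelism vector $[w_1,\dots,w_M]$ drops out entirely, which is precisely the claim.

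Concretely, I would start from the expanded objective in equation~\ref{TotalEnergyFrequecy}, noting that it separates as $E_{total} = \sum_{m=1}^M g_m(f_m)$, where each summand $g_m$ is a sum of monomials in $f_m$ each of which carries $w_m$ as a factor (the constant-in-$f_m$ terms are irrelevant for minimization). Since $g_m$ depends only on $f_m$, the unconstrained minimization decomposes into $M$ one-dimensional problems, and I would solve each by differentiating $g_m(f_m)/w_m$ and setting the result to zero. This gives, for each $m$ with $w_m > 0$, the stationarity condition
\begin{equation*}
\alpha m' c_1 d t_a f_m^{\alpha-1} + (\alpha-1) m' c_1 f_m^{\alpha-2} + c_2 d t_a - \frac{c_3}{f_m^{2}} = 0,
\end{equation*}
in which $w_m$ has been eliminated. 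The right-hand side is manifestly independent of the workload vector, so its positive root $f_m^\star$ is a function only of $m$, $d$, and the hardware constants.

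To promote ``stationary point'' to ``unconstrained minimizer'' I would invoke convexity: each $g_m(f_m)$ is a positive linear combination (coefficients all proportional to $w_m \ge 0$) of the convex-on-$R_+$ functions $f^\alpha, f^{\alpha-1}, f, 1/f$, so $g_m$ is convex on $R_+$, and the stationarity equation together with the boundary behavior $g_m(f)\to\infty$ as $f\to 0^+$ or $f\to\infty$ pins down a unique minimizer. Taking the product over $m$ yields a minimizer of the whole sum that is determined entirely by $d$ and the hardware platform.

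The only subtlety I anticipate is the degenerate case $w_m = 0$ for some $m$: then $g_m \equiv 0$ and any $f_m$ is trivially optimal, so the minimizer is not uniquely defined at that index. I would address this by either restricting the statement to indices with $w_m > 0$ or by adopting the convention that unused parallelism levels inherit the ``canonical'' $f_m^\star$ from the stationarity condition; either way, the substantive content of the lemma -- that the optimal frequencies depend on the application only through $d$ -- is preserved.
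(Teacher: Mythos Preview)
Your proposal is correct and follows essentially the same route as the paper: exploit separability of $E_{total}$ across the $f_m$, set each partial derivative to zero, and observe that $w_m$ factors out so that the resulting stationarity equation involves only $m'$, the hardware constants, and $d$. Your version is slightly more careful than the paper's in explicitly invoking convexity to upgrade the stationary point to the unique minimizer and in flagging the degenerate case $w_m=0$, but the core argument is identical.
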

\begin{proof}
The unconstrained optimization problem is:
 \begin{equation}\label{UnconstrainedTotalEnergy}
\begin{aligned}
   &\underset{f_1, f_2, .....f_M}{\text{minimize}} E_{total} = \sum_{m=1}^{M}[m'c1w_mdt_af_m^{\alpha} + m'c1w_mf_m^{\alpha-1}
 + c2w_mdt_af_m + c3\frac{w_m}{f_m} +c2w_m + c3w_mdt_a]
 \end{aligned}
 \end{equation}
According to the optimality condition for unconstrained convex function~\cite{Boyd}, Gradient $\nabla(E_{Total}) = 0$ at the optimal point. Due to the separable nature of the objective function, one can easily get optimal $f_m$ for $m$ active cores by equating to zero the derivative of $m^{th}$ summand of the objective w.r.t. $f_m$.
 One can get $f_m$ by: 
 $$\frac{\partial E_{Total}}{\partial f_m} = 0$$
 \begin{displaymath}
 \begin{aligned}
  &\frac{\partial E_{Total}}{\partial f_m} = m'c1w_mdt_af_m^{\alpha-1} + m'c1w_m(\alpha -1)f_m^{alpha-2} 
  & + c2w_mdt_a -\frac{c3w_m}{f_m^2} 
  \end{aligned}
 \end{displaymath}
 thus, $f_m$ can be obtained by solving the following polynomial:
  \begin{equation}\label{UnconstrainedOptimizer}
 \begin{aligned}
  m'c1dt_a\alpha f_m^{\alpha +1} + m'c1(\alpha-1)f_m^{\alpha} + c2dt_af_m^2 -c3 = 0
  \end{aligned}
 \end{equation}
 The workload term $w_m$ gets canceled out. The only characteristic of the application in this polynomial is its memory intensity $d$, all other terms in the coefficients are characteristics of the underlying hardware on which the application runs. Hence all applications with a given memory intensity $d$ running on a given hardware platform will have same optimal value for $f_m$. The same holds true for all the other frequencies.
\end{proof}
A deadline constraint can change the optimal frequencies if the unconstrained minimizer does not meet the deadline. The frequencies however should have some relationship to each other based on their relative number of active cores (parallelization).

\begin{theorem}\label{TheoremRange}
It holds for every pair $n,m \in \{1,2,....M\}$ such that $m\geq n$ and $w_m,w_n>0$ that:
 \begin{enumerate}
 \item for an optimal solution $f=[f1,f2,....f_M]$ to the constrained energy optimization problem (equation~\ref{TotalEnergyFrequecy}), $\frac{f_m}{f_n}$ lies in the interval $[\sqrt[\alpha]{\frac{n'}{m'}}, 1]$.
  \item for an optimal solution $f=[f1,f2,....f_M]$ to the constrained energy optimization problem without the static energy,
  $\frac{f_m}{f_n}$ lies in the interval $[\sqrt[\alpha]{\frac{n'}{m'}},\sqrt[\alpha+1]{\frac{n'}{m'}}]$.
 \end{enumerate}
\end{theorem}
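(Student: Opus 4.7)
The plan is to extract the Karush-Kuhn-Tucker (KKT) stationarity conditions for problem (\ref{TotalEnergyFrequecy}) and then pit them against a monotonicity argument, one ratio-bound at a time. Forming the Lagrangian $L = E_{total} + \lambda\bigl(\sum_m w_m/f_m + \sum_m w_m d t_a - t_{budget}\bigr)$ with $\lambda \geq 0$, differentiating with respect to $f_m$, setting the derivative to zero, and multiplying through by $f_m^2/w_m$ (legal since $w_m > 0$) should yield, for every active index $m$,
\begin{equation*}
\Phi_m(f_m) \;:=\; m' c_1 d t_a \alpha f_m^{\alpha+1} \;+\; m' c_1 (\alpha-1) f_m^{\alpha} \;+\; c_2 d t_a f_m^{2} \;=\; c_3 + \lambda.
\end{equation*}
Crucially, the right-hand side does not depend on $m$, so for any pair $m,n$ we get $\Phi_m(f_m) = \Phi_n(f_n)$. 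When the budget is slack ($\lambda = 0$) this reduces to the unconstrained condition from Lemma \ref{UnconstrainedOptimization}; when it is tight ($\lambda > 0$) the same structural identity survives, so the case split needs only a one-line acknowledgment. For part 2 the same derivation, with the static terms $c_2, c_3$ dropped, gives $m' c_1[d t_a \alpha f_m^{\alpha+1} + (\alpha-1) f_m^{\alpha}] = \lambda$, again constant in $m$.

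From here the argument is purely monotonicity. Because $k < 1$, the coefficient $m' = m + k(M-m)$ is strictly increasing in $m$, so for $m \geq n$ we have $m' \geq n'$. Also, for each fixed $m'$, $\Phi_m(\cdot)$ is strictly increasing on $\mathbb{R}_+$. The upper bound of part 1, $f_m/f_n \leq 1$, follows by contradiction: if $f_m > f_n$ then $\Phi_m(f_m) > \Phi_m(f_n) \geq \Phi_n(f_n)$, contradicting the common-value identity. For the lower bound $f_m/f_n \geq \sqrt[\alpha]{n'/m'}$, which is equivalent to $m' f_m^{\alpha} \geq n' f_n^{\alpha}$, I would factor the identity as
\begin{equation*}
m' f_m^{\alpha}\bigl[c_1 d t_a \alpha f_m + c_1(\alpha-1)\bigr] + c_2 d t_a f_m^{2} \;=\; n' f_n^{\alpha}\bigl[c_1 d t_a \alpha f_n + c_1(\alpha-1)\bigr] + c_2 d t_a f_n^{2}.
\end{equation*}
Using $f_m \leq f_n$ already established, the bracketed factor on the left is $\leq$ that on the right, and $c_2 d t_a f_m^2 \leq c_2 d t_a f_n^2$; so if we also had $m' f_m^{\alpha} < n' f_n^{\alpha}$ the whole LHS would be strictly smaller than the RHS, a contradiction.

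For part 2 the lower bound follows the same factorization (with $c_2$ absent). The sharper upper bound $f_m/f_n \leq \sqrt[\alpha+1]{n'/m'}$, equivalent to $m' f_m^{\alpha+1} \leq n' f_n^{\alpha+1}$, comes from a different factorization of the same identity, namely
\begin{equation*}
m' f_m^{\alpha+1}\bigl[c_1 d t_a \alpha + c_1(\alpha-1)/f_m\bigr] \;=\; n' f_n^{\alpha+1}\bigl[c_1 d t_a \alpha + c_1(\alpha-1)/f_n\bigr].
\end{equation*}
Since $f_m \leq f_n$ gives $1/f_m \geq 1/f_n$, the bracketed factor on the left is $\geq$ that on the right; if additionally $m' f_m^{\alpha+1} > n' f_n^{\alpha+1}$ the LHS would strictly dominate the RHS, contradiction. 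The same tighter upper bound fails in part 1 precisely because the extra $c_2 d t_a f_m^2$ term goes the ``wrong way'' under $f_m \leq f_n$, killing the sign; this is why only the weak bound $f_m/f_n \leq 1$ is claimed there.

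The main obstacle I expect is not the algebra but ensuring the KKT derivation is legitimate in edge cases: indices with $w_m = 0$ (for which $f_m$ is irrelevant and must be excluded, as the statement already does via the hypothesis $w_m, w_n > 0$), and the boundary behaviour of the feasible region $f \in \mathbb{R}_{++}^M$. Since the objective blows up as any $f_m \to 0^+$ through the $c_3 w_m / f_m$ term (part 1) or as $f_m \to \infty$, the optimum is interior, so stationarity applies; for part 2 without static power, the constraint must bind (otherwise $f_m \to 0$ drives energy to zero), so $\lambda > 0$ is automatic and the same analysis goes through.
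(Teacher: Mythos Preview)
Your proof is correct and takes a genuinely different route from the paper. The paper fixes the pair $m,n$, freezes the combined time $t_{n,m}=w_m/f_m+w_n/f_n+(w_m+w_n)dt_a$, parametrizes the remaining degree of freedom by $x=f_m/f_n$, and then decomposes $E_{n,m}(x)$ into three convex summands---a memory-access piece, a CPU-instruction piece, and a static piece---whose individual minimizers in $x$ work out to $\sqrt[\alpha+1]{n'/m'}$, $\sqrt[\alpha]{n'/m'}$, and $1$ respectively; the claimed interval then follows from the general fact that the minimizer of a sum of univariate convex functions lies between the smallest and largest of the component minimizers. You instead write the global KKT stationarity condition once, observe that it collapses to a scalar identity $\Phi_m(f_m)=c_3+\lambda$ that is the same for every active $m$, and extract each endpoint by a tailored factorization-plus-monotonicity contradiction. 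The paper's decomposition is more structural---each interval endpoint is visibly the optimal ratio for exactly one energy component, which is what drives the physical interpretation in the discussion following the theorem. Your argument is more elementary (no convexity of compositions, just monotone polynomials on $\mathbb{R}_+$) and handles the constrained and unconstrained cases uniformly through $\lambda$. One small omission: in part~2 you invoke $f_m\le f_n$ inside both factorization steps without re-deriving it for the static-free problem; the same one-line contradiction you gave for part~1 works verbatim with $c_2=c_3=0$, so just say so explicitly.
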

\begin{proof}
 For an arbitrary pair $n,m \in \{1,2,....M\}$ with $w_m,w_n>0$, both $f_m$ and $f_n$ are positive, hence there exists a positive constant $x$ such that $f_m = xf_n$.
 Let $t_{n,m}$ be the total time for which $n$ or $m$ cores are active. One can increase one of $f_n$ or $f_m$ and decrease the other such that the total time $t_{n,m}$, remains constant. 
 The total time $t_{n,m}$ can be expressed in terms of $f_n$ and $f_m$ as:
 \begin{displaymath}
  t_{n,m} = \frac{w_m}{f_m} + \frac{w_n}{f_n} + w_mdt_a + w_ndt_a
 \end{displaymath}
Substituting $t'_{n,m}$ for $t_{n,m} - w_mdt_a - w_ndt_a$ each of $f_m$ and $f_n$ can be expressed as a function of $x$ as follows:
\begin{equation}
 f_n = \frac{w_n + \frac{w_m}{x}}{t'{n,m}}
\end{equation}
\begin{equation}
 f_m = \frac{w_m + w_nx}{t'{n,m}}
\end{equation}
Let $E_{n,m}$ be the total energy consumed by the CPU during the time $t_{n,m}$.
\begin{displaymath}
\begin{aligned}
 &E_{n,m} = [m'c1f_m^{\alpha-1} + c2 + \frac{c3}{f_m}][w_m + w_mdt_af_m] 
           + [n'c1f_n^{\alpha-1} + c2 + \frac{c3}{f_n}][w_n + w_ndt_af_n]
\end{aligned}
\end{displaymath}
Rearranging the terms, we get:
\begin{equation}\label{EnergyNMParts}
\begin{aligned}
 &E_{n,m} = \underbrace{m'c1w_mdt_af_m^{\alpha} + n'c1w_ndt_af_n^{\alpha}}_\text{Memory Accesses} 
             + \underbrace{ m'c1w_mf_m^{\alpha-1} + n'c1w_nf_n^{\alpha-1}}_\text{CPU instructions}\\
           &\qquad\ \ + \underbrace{c2w_mdt_af_m + + c_2w_ndt_af_n + c2w_m + c2w_n + c3t_{n,m}}_\text{Static Energy}
\end{aligned}
\end{equation}
As highlighted in equation~\ref{EnergyNMParts}, the total energy $E_{n,m}$ is composed of three components, the first two terms constitute the CPU energy consumed on the memory accesses, the next two terms constitute the CPU energy consumed on CPU instructions and rest of the terms constitute the static energy which is independent of number of active cores.
Note that each of the three components is convex and nondecreasing in $[f_m,f_n]$, and $f_m$ and $f_n$ are convex~\cite{Boyd} in $x$, thus making these components convex functions of $x$.
We can therefore find the optimal ratio $x = \frac{f_m}{f_n}$ which minimizes the total energy $E_{n,m}$, by evaluating $\frac{dE_{n,m}}{dx} = 0$
%\begin{displaymath}
%\begin{aligned}
%& \frac{dE_{n,m}}{dx} = \frac{w_nw_m}{t'_{n.m}}[m'c1dt_a\alpha f_m^{\alpha -1} + m'c1(\alpha-1)f_m^{\alpha-2} + c2dt_a] \\
%& -\frac{w_nw_m}{t'_{n.m}}[\frac{n'c1dt_a\alpha f_n^{\alpha -1}}{x^2} + \frac{m'c1(\alpha-1)f_m^{\alpha-2}}{x^2} + \frac{c2dt_a}{x^2}]
% \end{aligned}
%\end{displaymath}

\begin{displaymath}
\begin{aligned}
& \frac{dE_{n,m}}{dx} = \frac{w_nw_m}{t'_{n.m}}[m'c1dt_a\alpha f_m^{\alpha -1} + m'c1(\alpha-1)f_m^{\alpha-2} + c2dt_a]\\
& -\frac{w_nw_m}{t'_{n.m}}[\frac{n'c1dt_a\alpha f_n^{\alpha -1}}{x^2} + \frac{m'c1(\alpha-1)f_m^{\alpha-2}}{x^2} + \frac{c2dt_a}{x^2}]
 \end{aligned}
\end{displaymath}
Making $\frac{dE_{n,m}}{dx} = 0$, we get:
\begin{equation}\label{MinimzerParts}
\begin{aligned}
&\underbrace{c1dt_a\alpha [m'f_m^{\alpha-1} - n'\frac{f_n^{\alpha-1}}{x^2}]}_\text{Memory Accesses}
&+\underbrace{c1(\alpha-1)[m'f_m^{\alpha -2} -n'\frac{f_n^{\alpha-2}}{x^2}]}_\text{CPU instructions}\\
&+\underbrace{c1dt_a\alpha[1-\frac{1}{x^2}]}_\text{Static Energy} = 0
 \end{aligned}
\end{equation}
Making the first component of~\ref{MinimzerParts} labeled `Memory Accesses' zero, we get $x$ that minimizes the energy spent on memory accesses. Similarly making the  second and the third components of equation~\ref{MinimzerParts} labeled `CPU instructions' and `Static Energy' zero respectively, we get $x$ that minimizes the energy spent on CPU instructions and $x$ that minimizes the static energy respectively.
Let us represent by $x_{mem}$, $x_{CPU}$ and $x_{static}$, the values of $x$ that minimize the energy consumed on memory accesses, CPU instructions and static energy respectively.
Thus we have, $x_{mem} = \sqrt[\alpha + 1]{\frac{n'}{m'}}$, $x_{CPU} = \sqrt[\alpha]{\frac{n'}{m'}}$, $x_{static} = 1$ .
%\begin{enumerate}
% \item $x_{mem} = \sqrt[\alpha + 1]{\frac{n'}{m'}}$ 
% \item $x_{CPU} = \sqrt[\alpha]{\frac{n'}{m'}}$.
% \item $x_{static} = 1$.
%\end{enumerate}
Since each of the three components of $E_{n,m}$ are convex in $x$ with each having a unique minimum different from the other two, the minimizer of the total energy lies somewhere in the interval $[min\{x_{CPU}, x_{mem}, x_{static}\}, max\{x_{CPU}, x_{mem}, x_{static}\}]$~\cite{Boyd}. Since we assumed that $m \geq n$, the minimizer of total energy lies in the range $[\sqrt[\alpha]{\frac{n'}{m'}}, 1]$.
Dropping the component for static energy in~\ref{MinimzerParts}, we get the sum of the  dynamic energy consumed by the CPU on memory accesses and CPU instructions respectively and its minimizer lies in the range $[\sqrt[\alpha]{\frac{n'}{m'}}, \sqrt[\alpha + 1]{\frac{n'}{m'}}]$.
\end{proof}
The implication of theorem~\ref{TheoremRange} is that the presence of memory accesses affects the frequency selection in two ways. First the memory accesses add to the CPU energy consumption a dynamic component which increases more sharply with frequency than the dynamic energy consumed on CPU instructions, thus pushing the optimal ratio further away. Second, it adds to the total energy consumption a static component that varies linearly with frequency and this component is minimized when $\frac{f_m}{f_n} = 1$. The minimizer of the overall energy tends to be closest to the minimizer of the most dominating component in the mix. It is well known that the static energy despite being an unavoidable portion of the overall energy is a much smaller component of the total energy compared to the dynamic energy.
In rest of this section, we focus our attention on the dynamic energy component of the overall energy.

\begin{lemma}\label{Relation}
for an optimal solution $f=[f1,f2,....f_M]$ to the constrained energy optimization problem(equation~\ref{TotalEnergyFrequecy}) without the static energy, the following holds for every pair $n,m \in \{1,2,....M\}$ with $w_m,w_n>0$:\\
 \begin{displaymath}
  \sqrt[\alpha]{m'[\alpha -1 + \alpha dt_af_m]}f_m =  \sqrt[\alpha]{n'[\alpha -1 + \alpha dt_af_n]}f_n
 \end{displaymath}
\end{lemma}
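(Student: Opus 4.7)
The plan is to apply Lagrange multipliers (KKT conditions) to the constrained optimization problem obtained from (\ref{TotalEnergyFrequecy}) by stripping the static-energy terms ($c2$ and $c3$ terms in the objective), exploiting the fact that the resulting problem is still convex with a single active inequality constraint (the budget). Since the budget constraint will be tight at any interesting optimum (otherwise we could lower all frequencies and strictly decrease energy), complementary slackness gives us equality there, and the KKT stationarity condition reduces to a simple gradient equation that decouples across $m$.

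First, I would write the objective without static energy as
\begin{displaymath}
E_{dyn}(f) = \sum_{m=1}^{M} c1\, m' w_m \bigl[dt_a f_m^{\alpha} + f_m^{\alpha-1}\bigr],
\end{displaymath}
and form the Lagrangian $L = E_{dyn}(f) + \lambda\bigl(\sum_m w_m/f_m + \sum_m w_m dt_a - t_{budget}\bigr)$ with $\lambda \geq 0$. Next, I would compute $\partial L/\partial f_m = 0$ for any index $m$ with $w_m > 0$, which yields
\begin{displaymath}
c1\, m' w_m \bigl[\alpha\, dt_a f_m^{\alpha-1} + (\alpha-1) f_m^{\alpha-2}\bigr] - \lambda \frac{w_m}{f_m^{2}} = 0.
\end{displaymath}
Multiplying through by $f_m^{2}/w_m$ and factoring out $f_m^{\alpha}$ gives
\begin{displaymath}
c1\, m' \bigl[(\alpha-1) + \alpha\, dt_a f_m\bigr] f_m^{\alpha} = \lambda,
\end{displaymath}
so that the quantity $m'\bigl[(\alpha-1) + \alpha\, dt_a f_m\bigr] f_m^{\alpha}$ is the same common value $\lambda/c1$ for every index with $w_m>0$.

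Finally, I would take $\alpha$-th roots of both sides and observe that the resulting expression $\sqrt[\alpha]{m'[(\alpha-1)+\alpha dt_a f_m]}\, f_m$ is therefore independent of $m$; specializing this to any two indices $n,m$ with $w_n,w_m>0$ gives the claimed equality. The only subtle point, and what I expect to be the main thing to justify carefully, is that the multiplier $\lambda$ is strictly positive (so that the $f_m$'s are actually coupled through the constraint); this follows because the unconstrained minimum of $E_{dyn}$ without the budget constraint would be at $f_m \to 0$ (the objective is strictly increasing in each $f_m$), which is infeasible for any finite $t_{budget}$, so the budget constraint must be active and $\lambda > 0$ by complementary slackness. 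Convexity of the problem guarantees that the KKT conditions are not merely necessary but also sufficient, so no further optimality verification is needed.
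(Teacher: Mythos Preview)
Your proof is correct. The paper, however, takes a different route: rather than applying KKT to the full problem, it reuses the pairwise machinery already set up in the proof of Theorem~\ref{TheoremRange}. There one fixes a pair $n,m$, holds the combined time $t_{n,m}$ spent with $n$ or $m$ active cores constant, parametrizes by $x=f_m/f_n$, and differentiates $E_{n,m}$ with respect to $x$. For Lemma~\ref{Relation} the paper simply drops the ``Static Energy'' term from equation~(\ref{MinimzerParts}), sets the remaining ``Memory Accesses'' + ``CPU instructions'' derivative to zero, and substitutes $x=f_m/f_n$ to obtain the stated relation directly. Your Lagrangian approach is arguably cleaner and more informative: it treats all indices simultaneously and yields the stronger intermediate fact that $m'[(\alpha-1)+\alpha dt_a f_m]f_m^{\alpha}$ equals the single common value $\lambda/c1$ for every $m$ with $w_m>0$, from which any pairwise equality is immediate. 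The paper's route, by contrast, requires no new computation beyond what Theorem~\ref{TheoremRange} already provided and avoids introducing a global multiplier. Both are standard first-order optimality arguments; yours is a self-contained alternative.
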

 Refer the Appendix for the proof. \\
%\begin{proof}
% To prove the relation between optimal $f_m$ and $f_n$ for dynamic energy, we take the 'Memory Accesses' and the 'CPU instructions' components of the equation~\ref{MinimzerParts}.
% \begin{displaymath}
%   \begin{aligned}
%   &\frac{dE_{n,m}}{dx} =
% &\underbrace{c1dt_a\alpha [m'f_m^{\alpha-1} - n'\frac{f_n^{\alpha-1}}{x^2}]}_\text{Memory Accesses}
% &+\underbrace{c1(\alpha-1)[m'f_m^{\alpha -2} -n'\frac{f_n^{\alpha-2}}{x^2}]}_\text{CPU instructions}\\
%  \end{aligned}
% \end{displaymath} 
% $\frac{dE_{n,m}}{dx} = 0$ at the optimal point.
%  \begin{displaymath}
%   \begin{aligned}
%   &\frac{dE_{n,m}}{dx} = \\
%  & c1dt_a\alpha [m'f_m^{\alpha-1} - n'\frac{f_n^{\alpha-1}}{x^2}] +c1(\alpha-1)[m'f_m^{\alpha -2} -n'\frac{f_n^{\alpha-2}}{x^2}] = 0 \implies\\
% &m'[\alpha -1 + \alpha dt_af_m]f_m^{\alpha-2} = \frac{f_n^{\alpha-2}}{x^2}n'[\alpha -1 + \alpha dt_af_n] \implies\\
% %&m'[\alpha -1 + \alpha dt_af_m]\frac{f_m^{\alpha-2}}{f_n^{\alpha-2}} = \frac{n'}{x^2}[\alpha -1 + \alpha dt_af_n]\implies\\
%  \end{aligned}
% \end{displaymath}
% \begin{equation}\label{RelationEquation}
%   \frac{f_m}{f_n} = \frac{\sqrt[\alpha]{n'[\alpha -1 + \alpha dt_af_n]}}{\sqrt[\alpha]{m'[\alpha -1 + \alpha dt_af_m]}}
% \end{equation}
%\end{proof}
This Lemma shows the relationship between the frequencies for two different parallel regions of a given schedule of an application. This is in contrast to the corresponding relationship in~\cite{ConvexAndScheduling}, which is, $\sqrt[\alpha]{n}f_n = \sqrt[\alpha]{m}f_m$. 
%It is easy to see that assuming that the inactive cores consume zero power and making $d$, the memory intensity equal to zero in equation~\ref{RelationEquation}, one gets the relationship shown in~\cite{ConvexAndScheduling}. Including $d$, makes the relation more precise, this is particularity important in the case of memory intensive applications.

Having a relationship between the optimal frequencies for different parallel regions of a schedule, the next natural step is to be able to analytically relate an optimal frequency for a parallel region to the optimal frequency of the serial region.  Lemma~\ref{CubicEquation} gives such a relation for $\alpha=2$. 
%Such an analytical relation can help us understand how different amounts of parallelization change the optimal frequency compared to the serial optimal frequency in the presence of memory accesses. T

\begin{lemma}\label{CubicEquation}
 For $\alpha=2$, the ratio $x_m= \frac{f_m}{f_1}$, of the optimal frequency $f_m$ for a parallel region of the schedule with $m$ active cores and the optimal frequency $f_1$ for the serial region is a solution to the following cubic equation:
 \begin{displaymath}
   \frac{m'}{1'}2dt_af1x_m^3 + \frac{m'}{1'}x_m^2 -(2dt_af1 + 1) = 0
 \end{displaymath}
 where $1'$ is a constant equal to $KM + (1-K)$
\end{lemma}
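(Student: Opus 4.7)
The plan is to specialize Lemma~\ref{Relation} to the case $\alpha = 2$ and $n = 1$, and then to do a clean algebraic rearrangement to arrive at the cubic. Since Lemma~\ref{Relation} already encodes the optimality condition relating any two frequencies in the dynamic-energy model, the work here is purely manipulative; no new variational argument is required.

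First I would instantiate Lemma~\ref{Relation} at $\alpha = 2$, $n = 1$, which yields
\[
  \sqrt{m'\bigl[1 + 2 d t_a f_m\bigr]}\,f_m \;=\; \sqrt{1'\bigl[1 + 2 d t_a f_1\bigr]}\,f_1,
\]
where $1' = 1 + K(M-1) = KM + (1-K)$ follows directly from the definition $m' = m + K(M-m)$ with $m = 1$. Both sides are positive, so squaring is reversible and gives
\[
  m'\bigl[1 + 2 d t_a f_m\bigr] f_m^{2} \;=\; 1'\bigl[1 + 2 d t_a f_1\bigr] f_1^{2}.
\]

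Next I would divide through by $1'\,f_1^{2}$ and substitute $x_m = f_m/f_1$, using $f_m = x_m f_1$ inside the bracket on the left-hand side. This turns the equation into
\[
  \frac{m'}{1'}\bigl[1 + 2 d t_a f_1 x_m\bigr] x_m^{2} \;=\; 1 + 2 d t_a f_1.
\]
Expanding the left-hand side and moving the right-hand side across produces exactly
\[
  \frac{m'}{1'} 2 d t_a f_1 \, x_m^{3} \;+\; \frac{m'}{1'} x_m^{2} \;-\; (2 d t_a f_1 + 1) \;=\; 0,
\]
which is the claimed cubic.

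Strictly speaking there is no real obstacle, only two minor bookkeeping points to flag in the write-up. The first is noting that the hypothesis $w_m, w_1 > 0$ inherited from Lemma~\ref{Relation} guarantees that $f_m$ and $f_1$ are well-defined and positive, so both sides of the squared identity are legitimate and the substitution $x_m = f_m/f_1$ makes sense. The second is simply recording the evaluation $1' = KM + (1-K)$ so that the coefficient in the statement is justified. No further case analysis or convexity argument is needed beyond what Lemma~\ref{Relation} already provides.
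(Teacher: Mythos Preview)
Your proposal is correct and follows essentially the same approach as the paper: both start from the optimality relation of Lemma~\ref{Relation} (the paper via its equation~\eqref{RelationEquation}), raise to the $\alpha$-th power, substitute $f_m = x_m f_1$, and rearrange into the polynomial. The only cosmetic difference is that you specialize to $\alpha = 2$, $n = 1$ at the outset while the paper carries general $\alpha$ and $n$ through the algebra and substitutes at the end.
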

Refer the appendix for the proof.\\
Note that it is possible for a schedule to have no serial region at all ($w_1 =0$). The purpose of expressing $x_m$ in terms of $f_1$ and $m'$ is to help understand by how much does the frequency for a given parallelization differ from $f_1$. One can think of $f_1$ as a \emph{reference frequency} for a given hardware and application combination such that each of the optimal parallel frequencies $f_m$ is related to $f_1$ by a multiplicative factor $x_m$.
%Recall from Lemma~\ref{UnconstrainedOptimization} that the unconstrained optimizer does not depend upon the workload or number of cycles for each region and depends only on the hardware constants and the memory intensity of a given application. 
Coming back to equation~\ref{RelativeCritical}, the coefficient of the cubic term is a product of the parallelization $m'$, the memory characteristic $2dt_a$ of the workload and the critical serial frequency $f_1$ and the memory intensity of the application. For the sake of analysis, we call the term $2dt_af1$, the memory overload factor. 
%Figure (2a) shows a plot of of $\frac{f_m}{f_1}$ against the memory overload factor, $2dt_af_1$. We normalize $m'$ and $1'$ to $m$ and $1$ in the plots.
Plotting $\frac{f_m}{f_1}$ against the memory overload factor, $2dt_af_1$(Figure~\ref{figure: ReductionRatio} in Appendix), one can observe that as $2dt_af_1$  changes from $0$ to $1$, the optimal ratio changes very quickly and attains the mid point of $\sqrt[2]{\frac{1}{m}}$ and $\sqrt[3]{\frac{1}{m}}$ and then it changes more slowly and later becomes almost constant close to $\sqrt[3]{\frac{1}{m}}$. So, as memory overhead increases, the optimal frequency for $m$ active cores tends to be inversely proportional to $\sqrt[3]{m}$. Without accounting for the memory accesses or for CPU intensive applications on the other hand the optimal frequency for $m$ active cores is inversely proportional to $\sqrt[2]{m}$. Thus accounting for memory accesses does not allow as much reduction in frequencies for parallel regions as predicted by the model in~\cite{ConvexAndScheduling}. 
%which ignores the memory accesses altogether.
This confirms that the energy savings predicted by~\cite{ConvexAndScheduling} are over optimistic especially in the case of memory intensive applications (with a high memory intensity, $d$) running on a slow hardware (with a high access 
delay, $t_a$) on a tight performance budget (with a high critical frequency, $f_1$).
In general, from theorem~\ref{TheoremRange} and generalizing the above exposition, it can be established that the optimal frequency for $m$ active cores for a memory intensive application (with $\alpha dT_af_1$ sufficiently large) is inversely proportional to $\sqrt[alpha+1]{m}$.

\section{Frequency Selection and Scheduling criteria for memory intesive workloads}\label{SelectionScheduling}
 %The selection of optimal frequencies for a given workload running on a specific hardware platform is a matter of solving the convex programming problem given in equation~\ref{TotalEnergyFrequecy}. 
 In this section, we look at the problem of frequency selection for a memory intensive application analytically to investigate the relationship between optimal frequencies and the distribution of workload or schedule of the application. As demonstrated in Section~\ref{AffectMemoryAccesses}, the optimal frequency $f_m$ for a memory intensive application considering only the dynamic energy is inversely proportional to $\sqrt[\alpha +1]{m}$ and one can consider the existence of a reference frequency such that each of the optimal frequencies can be expressed as a product of $\frac{1}{\sqrt[\alpha +1]{m}}$ and the reference frequency. Let us denote the reference frequency by $f'$. Substituting $f_m$ with $\frac{f'}{\sqrt[\alpha +1]{m}}$ in the dynamic energy only part of optimization problem given in equation~\ref{TotalEnergyFrequecy}, we get
\begin{displaymath}
\begin{aligned} 
  &\underset{f'}{\text{minimize}}[\sum_{m=1}^{M}\pi_{m'}w_m c1dt_af'^{\alpha} + \sum_{m=1}^{M}\pi_{m'}^2 w_m c1f'^{\alpha-1}]\\
 & \text{subject to}\\
 & \qquad \qquad f' \geq \frac{\sum_{m=1}^{M}\pi_{m'} w_m}{t_{budget} - \sum_{m=1}^{M}w_mdt_a}
\end{aligned}
\end{displaymath}
where $\pi_{m'} =\sqrt[\alpha+1]{m'}$ .
One can apply the KKT conditions~\cite{Boyd} on the above convex optimization problem to find an analytical formula for the optimal reference frequency. The optimal reference frequency for dynamic energy is:
\begin{equation}
f' = \frac{\sum_{m=1}^{M}\pi_{m'}w_m}{t_{budget} - \sum_{m=1}^{M}w_mdt_a}
\end{equation}
Refer Lemma~\ref{DynamicOptimizer} in the appendix for the proof.

If one were to minimize the total energy consumption including the static energy, the optimal ratio $\pi_{m'}$ would lie between $\sqrt[\alpha+1]{m'}$ and $1$ (Theorem\ref{TheoremRange}). Since static energy is only a small portion of the overall energy, we can assume $\pi_{m'} \approx  \sqrt[\alpha+1]{m'}$. Substituting  $\pi_{m'}$ with $\sqrt[\alpha+1]{m'}$ in equation~\ref{TotalEnergyFrequecy} and applying KKT conditions we get:
\begin{displaymath}
f' = max(\text{Unconstrained optimizer of }  tEnergy(f'),  \frac{\sum_{m=1}^{M}\pi_{m'}w_m}{t_{budget} - \sum_{m=1}^{M}w_mdt_a})
\end{displaymath}
where $\pi_{m'} =\sqrt[\alpha+1]{m'}$ and \\
\begin{displaymath}
\begin{aligned}
 &tEnergy(f') = \sum_{m=1}^{M}\pi_{m'}w_m c1dt_af'^{\alpha} + \sum_{m=1}^{M}\pi_{m'}^2 w_m c1f'^{\alpha-1} \\
& \qquad + c2dt_a\sum_{m=1}^{M} w_m \frac{f'}{\pi_{m'}}  +  c3\sum_{m=1}^{M}\frac{w_m\pi_{m'}}{f'}  + c2\sum_{m=1}^{M}w_m + c3dt_a\sum_{m=1}^{M}w_m
 \end{aligned}
 \end{displaymath}
 Refer Lemma~\ref{TotalOptimizer} in the appendix for the proof.
 
%The optimal value of $f'$ that minimizes the unconstrained objective can be obtained by equating to zero, the differential of the objective with respect to $f'$. The unconstrained optimal value of $f'$ may not satisfy the budget constraint. The minimum $f'$ that meets the deadline is $\frac{\sum_{m=1}^{M}\pi_{m'}w_m}{t_{budget} - \sum_{m=1}^{M}w_mdt_a}$. We take the maximum of the unconstrained optimal value of $f'$ and the minimum $f'$ allowed by the deadline as our critical frequency.
\paragraph{Schedule}
We should make an observation about the \emph{makespan}~\cite{Brucker} (commonly used as a performance measure of scheduling algorithms). Gerards et al show in~\cite{ConvexAndScheduling} that defining the makespan of an application as the number of CPU cycles required, instead of the time required to run the application ($S = \sum_{m=1}^{M}w_m$)  gives us a definition independent of frequency. Accounting for the CPU cycles produced during memory accesses, the definition of makespan becomes $\sum_{m=1}^{M}w_m + w_mdt_af_m$ which is not independent of frequency.
\paragraph{Scheduling Criteria:} The reference frequency is the largest of all the optimal frequencies and the dynamic energy is an increasing function of frequencies. Thus minimizing the minimum allowed reference frequency minimizes the dynamic energy. Therefore, when looking for a schedule that minimizes the energy under a performance constraint, one should pick the one which minimizes the minimum allowed reference frequency.
 Therefore a schedule that minimizes the following quantity is optimal in terms of energy consumption:
 \begin{displaymath}
   \frac{\bar{S}}{t_{budget} - Sdt_a}
 \end{displaymath}
  where $\overline{S} =\sum_{m=1}^{M}\pi_{m'}w_m$ is the weighted sum of the parallelism vector, $w=[w_1, w_2,....w_M]$  and $S = \sum_{m=1}^{M}w_m$ is the sum of the parallelism vector.
  Note that the above scheduling criteria is different from the one suggested in~\cite{ConvexAndScheduling} which suggests minimizing  $\overline{S} =\sum_{m=1}^{M}\pi_{m'}w_m$ , which the authors define as the weighted makespan. Traditionally, scheduling algorithms minimize the makespan~\cite{MinMakespan1, MinMakespan2, Pinedo} and Gerards et al compare the traditional performance measure which is makespan with their scheduling criteria to suggest that minimizing the weighted makespan minimizes the energy consumption. The scheduling criteria we suggest here in contrast combines the criteria of~\cite{ConvexAndScheduling} with that of the traditional measure.

\section{Conclusion and Future Work}\label{Conclusion}
We have presented a comprehensive study on the frequency selection and workload distribution (scheduling) for energy optimization of memory intensive parallel workloads. In this work, we assume that all the tasks in a workload have the same data to CPU workload ratio. In future, we plan to extend this work by allowing the tasks to have different memory intensities.

% 
%\bibliography{SelfStab} 
%\bibliographystyle{plain}

\pagebreak
\bibliographystyle{abbrv}
\bibliography{EnergyEfficiency.bib} 

\pagebreak

\appendix
\renewcommand{\thesection}{\Alph{section}}%
% or try \arabic{section}
\section{Figures}
\begin{figure}[h!]
  \begin{center}
   \includegraphics[scale=0.30]{}
  \end{center}
  \vspace*{-15mm}
  \caption{A task dependency graph}
   \label{figure:TaskDependenyGraph}
 \end{figure}
 
 \begin{figure}[h!]
  \begin{center}
   \includegraphics[scale=0.21]{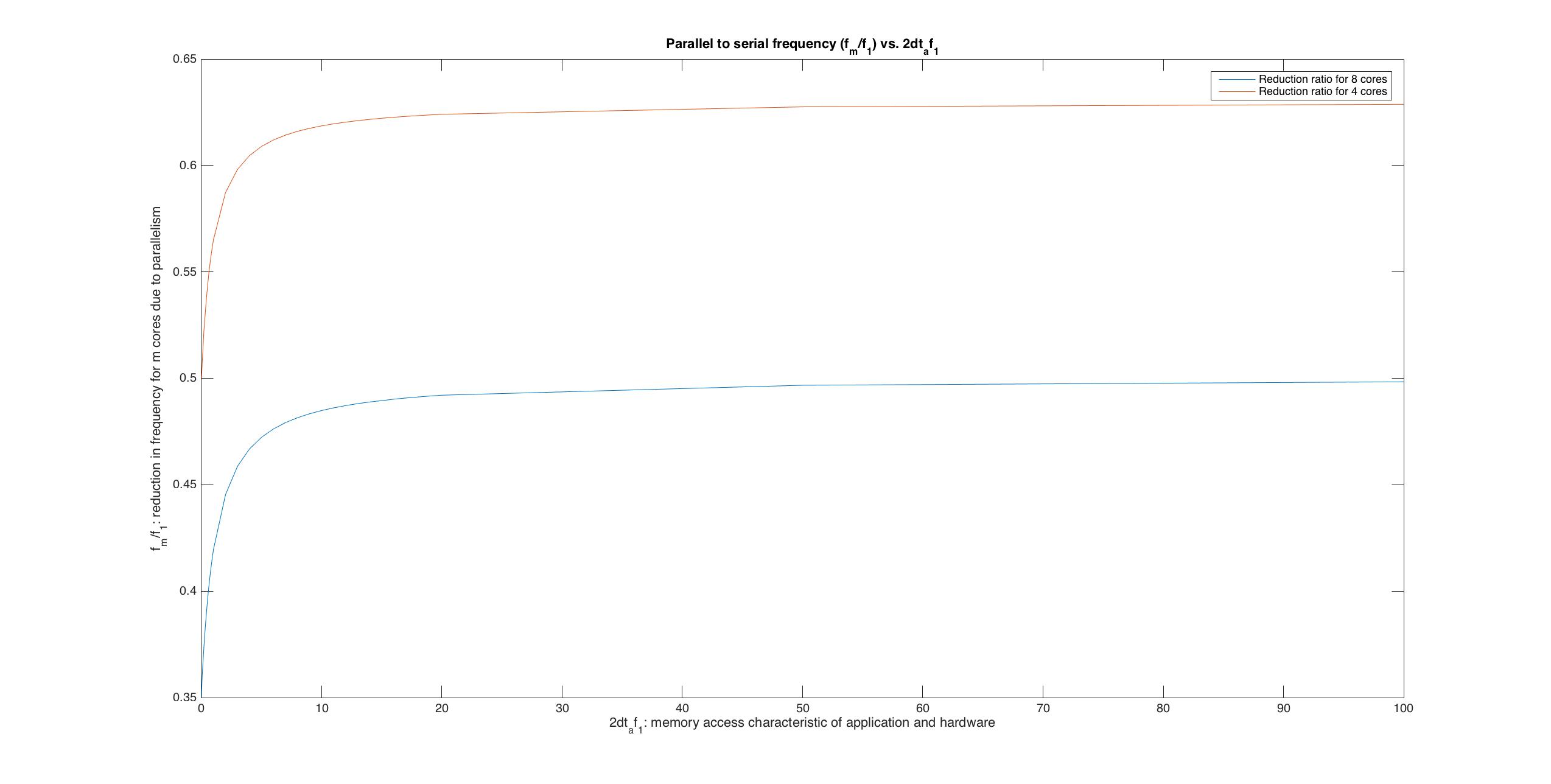}
  \end{center}
  \vspace*{-15mm}
   \caption{Optimal ratio vs memory overload}
   \label{figure: ReductionRatio}
 \end{figure}

\section{Proofs}

\olem{Relation}
for an optimal solution $f=[f1,f2,....f_M]$ to the constrained energy optimization problem(equation~\ref{TotalEnergyFrequecy}) without the static energy, the following holds for every pair $n,m \in \{1,2,....M\}$ with $w_m,w_n>0$:\\
 \begin{displaymath}
  \sqrt[\alpha]{m'[\alpha -1 + \alpha dt_af_m]}f_m =  \sqrt[\alpha]{n'[\alpha -1 + \alpha dt_af_n]}f_n
 \end{displaymath}
\eolem

\begin{proof}
 To prove the relation between optimal $f_m$ and $f_n$ for dynamic energy, we take the `Memory Accesses' and the `CPU instructions' components of the equation~\ref{MinimzerParts}.
 \begin{displaymath}
   \begin{aligned}
   &\frac{dE_{n,m}}{dx} =
 &\underbrace{c1dt_a\alpha [m'f_m^{\alpha-1} - n'\frac{f_n^{\alpha-1}}{x^2}]}_\text{Memory Accesses}
 &+\underbrace{c1(\alpha-1)[m'f_m^{\alpha -2} -n'\frac{f_n^{\alpha-2}}{x^2}]}_\text{CPU instructions}\\
  \end{aligned}
 \end{displaymath} 
 $\frac{dE_{n,m}}{dx} = 0$ at the optimal point.
  \begin{displaymath}
   \begin{aligned}
   &\frac{dE_{n,m}}{dx} = \\
  & c1dt_a\alpha [m'f_m^{\alpha-1} - n'\frac{f_n^{\alpha-1}}{x^2}] +c1(\alpha-1)[m'f_m^{\alpha -2} -n'\frac{f_n^{\alpha-2}}{x^2}] = 0 \implies\\
 &m'[\alpha -1 + \alpha dt_af_m]f_m^{\alpha-2} = \frac{f_n^{\alpha-2}}{x^2}n'[\alpha -1 + \alpha dt_af_n] \implies\\
 %&m'[\alpha -1 + \alpha dt_af_m]\frac{f_m^{\alpha-2}}{f_n^{\alpha-2}} = \frac{n'}{x^2}[\alpha -1 + \alpha dt_af_n]\implies\\
  \end{aligned}
 \end{displaymath}
 \begin{equation}\label{RelationEquation}
   \frac{f_m}{f_n} = \frac{\sqrt[\alpha]{n'[\alpha -1 + \alpha dt_af_n]}}{\sqrt[\alpha]{m'[\alpha -1 + \alpha dt_af_m]}}
 \end{equation}
\end{proof}

\olem{CubicEquation}
 For $\alpha=2$, the ratio $x_m= \frac{f_m}{f_1}$, of the optimal frequency $f_m$ for a parallel region of the schedule with $m$ active cores and the optimal frequency $f_1$ for the serial region is a solution to the following cubic equation:
 \begin{displaymath}
   \frac{m'}{1'}2dt_af1x_m^3 + \frac{m'}{1'}x_m^2 -(2dt_af1 + 1) = 0
 \end{displaymath}
 where $1'$ is a constant equal to $KM + (1-K)$
\eolem
\begin{proof}
 From equation~\ref{RelationEquation} we have
 \begin{displaymath}
  \frac{m'}{n'}(\frac{f_m}{f_n})^\alpha = \frac{\alpha -1 + \alpha dt_af_n}{\alpha -1 + \alpha dt_af_m}
 \end{displaymath}
 Replacing $f_m$ by $x_mf_n$ throughout, where $x_m = \frac{f_m}{f_1}$ we get
 \begin{displaymath}
 \begin{aligned}
  &\frac{m'}{n'}x_m^\alpha = \frac{\alpha -1 + \alpha dt_af_n}{\alpha -1 + \alpha dt_ax_mf_n} \implies\\
 & m'\alpha dt_af_nx_m^{alpha+1} + m'(\alpha-1)x_m^\alpha = n'(\alpha - 1 + \alpha dt_af_n) \implies \\
 &\frac{m'}{n'}\alpha dt_af_nx_m^{alpha+1} + \frac{m'}{n'}(\alpha-1)x_m^\alpha -(\alpha dt_af_n + \alpha -1) = 0
  \end{aligned}
 \end{displaymath}
 We now have the ratio of two optimal frequencies expressed in terms of one of the two frequencies $(f_n)$ and their relative parallelization $(\frac{m'}{n'})$ and the memory intensity of the application in question. 
 Substituting $f_n$ with $f_1$, $n'$ with $1'$ for serial region,  where $1'= KM + 1-K$  and $\alpha$ with 2, we get 
 \begin{equation}\label{RelativeCritical}
  \frac{m'}{1'}2dt_af1x_m^3 + \frac{m'}{1'}x_m^2 -(2dt_af1 + 1) = 0
  \end{equation}
\end{proof}

\begin{lemma}\label{DynamicOptimizer}
The optimal reference frequency for minimizing the dynamic energy consumption of memory intensive applications  is:
\begin{displaymath}
f' = \frac{\sum_{m=1}^{M}\pi_{m'}w_m}{t_{budget} - \sum_{m=1}^{M}w_mdt_a}
\end{displaymath}
where $\pi_{m'} =\sqrt[\alpha+1]{m'}$.
\end{lemma}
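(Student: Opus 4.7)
The plan is to treat Lemma~\ref{DynamicOptimizer} as a one-dimensional constrained convex program in $f'$ and show that the lower-bound constraint must bind at the optimum, because the dynamic-energy objective is strictly increasing in $f'$. First I would rewrite the objective in the compact form
\[
E_{dyn}(f') \;=\; A\, f'^{\alpha} + B\, f'^{\alpha-1},
\]
where $A = c_1 d t_a \sum_{m=1}^{M} \pi_{m'} w_m > 0$ and $B = c_1 \sum_{m=1}^{M} \pi_{m'}^{2} w_m > 0$, and derive the feasible set directly from the deadline: starting from $\sum_m w_m/f_m + \sum_m w_m d t_a \le t_{budget}$ and substituting $f_m = f'/\pi_{m'}$ produces $\sum_m \pi_{m'} w_m / f' \le t_{budget} - \sum_m w_m d t_a$, which rearranges to $f' \ge L$ with $L$ equal to the claimed right-hand side (assuming the deadline is memory-feasible, i.e.\ $t_{budget} > \sum_m w_m d t_a$).

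Next I would establish strict monotonicity of the objective. Differentiating,
\[
E'_{dyn}(f') \;=\; A\,\alpha\, f'^{\alpha-1} + B\,(\alpha-1)\, f'^{\alpha-2},
\]
which is strictly positive on $(0,\infty)$ since $\alpha \ge 2$ and $A,B>0$. Hence $E_{dyn}$ is strictly increasing on the feasible half-line $[L,\infty)$, and its minimum is attained uniquely at the left endpoint $f^{\ast} = L$. Equivalently, I would formalize this with KKT on the Lagrangian $\mathcal{L}(f',\mu) = E_{dyn}(f') - \mu\,(f' - L)$: stationarity gives $E'_{dyn}(f^{\ast}) = \mu^{\ast}$, dual feasibility gives $\mu^{\ast} \ge 0$, and since $E'_{dyn}(f^{\ast}) > 0$ we must have $\mu^{\ast} > 0$, so complementary slackness forces $f^{\ast} = L$, which is exactly the stated formula.

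The argument is essentially a one-line observation once the reformulation is in place, so there is no genuine technical obstacle; the only thing requiring care is the bookkeeping of the substitution $f_m = f'/\pi_{m'}$ in both the objective and the constraint, and the explicit statement of the feasibility condition $t_{budget} > \sum_m w_m d t_a$ needed for $L$ to be well-defined and positive. I would also remark briefly that the same monotonicity argument justifies the \emph{reference frequency} viewpoint used later in the scheduling-criterion discussion, since minimizing $L$ over schedules directly minimizes the optimal dynamic energy.
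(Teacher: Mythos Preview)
Your proposal is correct and follows essentially the same approach as the paper: both arguments reduce to observing that the dynamic-energy objective is strictly increasing in $f'$ on $(0,\infty)$ (since $\alpha\ge 2$ and the coefficients are positive), so KKT/complementary slackness forces the deadline constraint to bind, giving $f'=L$. Your version is slightly tidier in that you drop the redundant constraint $f'\ge 0$ (it is implied by $f'\ge L>0$) and make explicit the feasibility assumption $t_{budget}>\sum_m w_m d t_a$, but the substance is identical.
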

\begin{proof}
As explained in Section~\ref{SelectionScheduling}, the optimization problem for the dynamic energy, considering $f_m = \frac{f'}{\sqrt[\alpha+1]{m'}}$ is:

\begin{displaymath}
\begin{aligned} 
  &\underset{f'}{\text{minimize}}[\sum_{m=1}^{M}\pi_{m'}w_m c1dt_af'^{\alpha} + \sum_{m=1}^{M}\pi_{m'}^2 w_m c1f'^{\alpha-1}]\\
 & \text{subject to}\\
 & \qquad \qquad f' \geq \frac{\sum_{m=1}^{M}\pi_{m'}w_m}{t_{budget} - \sum_{m=1}^{M}w_mdt_a}\\
 &\qquad \qquad f' \geq 0
\end{aligned}
\end{displaymath}
where $\pi_{m'} =\sqrt[\alpha+1]{m'}$.

Denoting the objective function by $dEnergy(f')$, the \emph{Lagrangian}~\cite{Boyd} for this optimization problem is:

\begin{displaymath}
L(f', \lambda_1, \lambda_2) = dEnergy(f') +  \lambda_1(\frac{\sum_{m=1}^{M}\pi_{m'}w_m}{t_{budget} - \sum_{m=1}^{M}w_mdt_a} -f') -\lambda_2f'
\end{displaymath}
where $\lambda_1$ and $\lambda_2$ are the Lagrange multipliers for the inequality constraints.
The differential of the Lagrangian w.r.t. $f'$ is:
\begin{displaymath}
     \frac{d L(f', \lambda_1, \lambda_2)}{df'} =  \frac{ddEnergy(f')}{df'}  -\lambda_1 -\lambda_2
\end{displaymath}
Applying KKT condition  $\frac{d L(f', \lambda_1, \lambda_2)}{df'} = 0$ at the optimal point, we get:
\begin{equation}\label{LambdaSum1}
    \lambda_1 + \lambda_2  =  \frac{ddEnergy(f')}{df'}  
\end{equation}
Note that $\lambda_1 > 0$, since $dEnergy(f')$ is an increasing function of $f'$.
Applying the complementary slackness condition on the second constraint ($f' \geq 0$), we get, $\lambda_2f' = 0$, $f'$ can't be zero, therefore we get  $\lambda_2 = 0$. It follows thus from equation~\ref{LambdaSum1}  that:
\begin{displaymath}
\lambda_1 =  \frac{ddEnergy(f')}{df'}  
\end{displaymath}
Applying complementary slackness on the the first constraint, we get $\lambda_1(\frac{\sum_{m=1}^{M}\pi_{m'}w_m}{t_{budget} - \sum_{m=1}^{M}w_mdt_a} -f') = 0$, which further implies that $\frac{\sum_{m=1}^{M}\pi_{m'}w_m}{t_{budget} - \sum_{m=1}^{M}w_mdt_a} -f' = 0$.
Thus the optimal point is:
\begin{displaymath}
\begin{aligned}
&\lambda_1 =  \frac{ddEnergy(f')}{df'} \\
&\lambda_2 = 0\\
&f' = \frac{\sum_{m=1}^{M}\pi_{m'}w_m}{t_{budget} - \sum_{m=1}^{M}w_mdt_a}
\end{aligned}
\end{displaymath}
\end{proof}

\begin{lemma}\label{TotalOptimizer}
The optimal reference frequency for minimizing the overall energy consumption of memory intensive applications  is:
\begin{displaymath}
f' = max(\text{Unconstrained optimizer of }  tEnergy(f'),  \frac{\sum_{m=1}^{M}\pi_{m'}w_m}{t_{budget} - \sum_{m=1}^{M}w_mdt_a})
\end{displaymath}
where $\pi_{m'} =\sqrt[\alpha+1]{m'}$ and \\
\begin{displaymath}
\begin{aligned}
 &tEnergy(f') = \sum_{m=1}^{M}\pi_{m'}w_m c1dt_af'^{\alpha} + \sum_{m=1}^{M}\pi_{m'}^2 w_m c1f'^{\alpha-1} \\
& \qquad + c2dt_a\sum_{m=1}^{M} w_m \frac{f'}{\pi_{m'}}  +  c3\sum_{m=1}^{M}\frac{w_m\pi_{m'}}{f'}  + c2\sum_{m=1}^{M}w_m + c3dt_a\sum_{m=1}^{M}w_m
 \end{aligned}
 \end{displaymath}
\end{lemma}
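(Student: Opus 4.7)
The plan is to mirror the structure of the proof of Lemma~\ref{DynamicOptimizer}, but now carry along the static-energy terms that were dropped in that argument. First I would substitute $f_m = f'/\sqrt[\alpha+1]{m'} = f'/\pi_{m'}$ into every summand of the objective of the constrained optimization problem~\eqref{TotalEnergyFrequecy}: the $m'c1 w_m d t_a f_m^{\alpha}$ term contributes $\pi_{m'} w_m c1 d t_a f'^{\alpha}$, the $m'c1 w_m f_m^{\alpha-1}$ term contributes $\pi_{m'}^2 w_m c1 f'^{\alpha-1}$, the $c2 w_m d t_a f_m$ term contributes $(c2 d t_a w_m/\pi_{m'}) f'$, the $c3 w_m/f_m$ term contributes $c3 w_m \pi_{m'}/f'$, and the remaining terms $c2 w_m + c3 w_m d t_a$ are constants in $f'$. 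Summing over $m$ recovers exactly the stated $tEnergy(f')$. The same substitution applied to the deadline constraint in~\eqref{TotalEnergyFrequecy} reduces to $f' \geq f_{\min}$ with $f_{\min} := \frac{\sum_m \pi_{m'} w_m}{t_{\text{budget}} - \sum_m w_m d t_a}$, exactly as in Lemma~\ref{DynamicOptimizer}.

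Next I would verify convexity: for $\alpha \geq 2$, each of $f'^{\alpha}$, $f'^{\alpha-1}$, $f'$, and $1/f'$ is convex on $\mathbb{R}_+$, and the coefficients multiplying them are nonnegative, so $tEnergy$ is a convex function on $\mathbb{R}_+$. Unlike $dEnergy$, however, $tEnergy$ is not monotone: the $c3 w_m \pi_{m'}/f'$ piece is strictly decreasing, so $tEnergy$ has a finite interior minimizer, which I will call $f'_{\text{unc}}$, characterized by $\frac{d\,tEnergy}{df'}(f'_{\text{unc}}) = 0$.

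Then I would form the Lagrangian with multipliers $\lambda_1, \lambda_2 \geq 0$ for the constraints $f_{\min} - f' \leq 0$ and $-f' \leq 0$, and write the KKT stationarity condition as $\frac{d\,tEnergy}{df'}(f') = \lambda_1 + \lambda_2$. Since $f' > 0$ at any feasible point, complementary slackness on the nonnegativity constraint forces $\lambda_2 = 0$. The remaining case split is driven by the sign of $\lambda_1$: if $\lambda_1 = 0$, stationarity gives $f' = f'_{\text{unc}}$, which is feasible precisely when $f'_{\text{unc}} \geq f_{\min}$; if $\lambda_1 > 0$, complementary slackness on the deadline constraint yields $f' = f_{\min}$, and stationarity then determines $\lambda_1 = \frac{d\,tEnergy}{df'}(f_{\min}) > 0$, which is consistent exactly when $f_{\min} > f'_{\text{unc}}$ (by monotonicity of the derivative of a strictly convex function).

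The main obstacle, and the step that needs the most care, is arguing that the two cases above cover the optimum uniquely and combine into the stated $\max$ formula. By convexity of $tEnergy$ on the convex feasible set $[f_{\min}, \infty)$, the minimizer over this set is unique and equals either the unconstrained interior minimum $f'_{\text{unc}}$ (when it lies in the feasible set) or the boundary point $f_{\min}$ (when $f'_{\text{unc}} < f_{\min}$, since then $tEnergy$ is strictly increasing on $[f_{\min}, \infty)$). In either case the optimum is $\max(f'_{\text{unc}}, f_{\min})$, which is the claimed formula. The only subtlety beyond the bookkeeping is observing that $t_{\text{budget}} > \sum_m w_m d t_a$ must hold for the feasible region to be nonempty, so $f_{\min}$ is finite and positive; this is implicit in assuming that the original constrained problem is feasible.
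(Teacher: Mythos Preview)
Your proposal is correct and follows essentially the same approach as the paper: substitute $f_m=f'/\pi_{m'}$ into~\eqref{TotalEnergyFrequecy}, form the Lagrangian with two multipliers, use complementary slackness to kill $\lambda_2$, and then split on whether $\lambda_1=0$ (giving the unconstrained minimizer) or $\lambda_1>0$ (forcing $f'=f_{\min}$). You are slightly more explicit than the paper in verifying convexity, in explaining why the $1/f'$ term creates a finite interior minimizer, and in justifying that the two KKT cases combine into the $\max$ formula, but structurally the arguments are the same.
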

\begin{proof}
The optimization problem for the total energy, considering $f_m = \frac{f'}{\sqrt[\alpha+1]{m'}}$ is:

\begin{equation}\label{TotalEnergyFrequecy2}
\begin{aligned}
   &\underset{f'}{\text{minimize}} [\sum_{m=1}^{M}\pi_{m'}w_m c1dt_af'^{\alpha} + \sum_{m=1}^{M}\pi_{m'}^2 w_m c1f'^{\alpha-1} \\
 &+ c2dt_a\sum_{m=1}^{M} w_m \frac{f'}{\pi_{m'}}  +  c3\sum_{m=1}^{M}\frac{w_m\pi_{m'}}{f'}  + c2\sum_{m=1}^{M}w_m + c3dt_a\sum_{m=1}^{M}w_m]\\
 & \text{subject to}\\
 & \qquad \qquad f' \geq \frac{\sum_{m=1}^{M}\pi_{m'}w_m}{t_{budget} - \sum_{m=1}^{M}w_mdt_a}\\
 &\qquad \qquad f' \geq 0
\end{aligned}
 \end{equation}

%\begin{displaymath}
%\begin{aligned} 
%  &\underset{f'}{\text{minimize}}[\sum_{m=1}^{M}\pi_{m'}w_m c1dt_af'^{\alpha} + \sum_{m=1}^{M}\pi_m'^2 w_m c1f'^{\alpha-1}]\\
% & \text{subject to}\\
% & \qquad \qquad f' \geq \frac{\sum_{m=1}^{M}cw_m}{t_{budget} - \sum_{m=1}^{M}w_mdt_a}\\
% &\qquad \qquad f' \geq 0
%\end{aligned}
%\end{displaymath}
where $\pi_{m'} =\sqrt[\alpha+1]{m'}$.

Denoting the objective function by $tEnergy(f')$, the \emph{Lagrangian}~\cite{Boyd} for this optimization problem is:

\begin{displaymath}
L(f', \lambda_1, \lambda_2) = tEnergy(f') +  \lambda_1(\frac{\sum_{m=1}^{M}\pi_{m'}w_m}{t_{budget} - \sum_{m=1}^{M}w_mdt_a} -f') -\lambda_2f'
\end{displaymath}
where $\lambda_1$ and $\lambda_2$ are the Lagrange multipliers for the inequality constraints.
The differential of the Lagrangian w.r.t. $f'$ is:
\begin{displaymath}
     \frac{d L(f', \lambda_1, \lambda_2)}{df'} =  \frac{dtEnergy(f')}{df'}  -\lambda_1 -\lambda_2
\end{displaymath}
Applying KKT condition  $\frac{d L(f', \lambda_1, \lambda_2)}{df'} = 0$ at the optimal point, we get:
\begin{equation}\label{LambdaSum2}
    \lambda_1 + \lambda_2  =  \frac{ddEnergy(f')}{df'}  
\end{equation}
Applying the complementary slackness condition on the second constraint ($f' \geq 0$), we get, $\lambda_2f' = 0$, $f'$ can't be zero, therefore we get  $\lambda_2 = 0$. It follows thus from equation~\ref{LambdaSum2}  that:
\begin{displaymath}
\lambda_1 =  \frac{dtEnergy(f')}{df'}  
\end{displaymath}
Note that, unlike Lemma~\ref{DynamicOptimizer}, the total energy function $tEnergy(f')$ is not an increasing function of $f'$.  Here we can't say with certainity that $\lambda_1 > 0$.
Applying complementary slackness on the the first constraint, we get $\lambda_1(\frac{\sum_{m=1}^{M}\pi_{m'}w_m}{t_{budget} - \sum_{m=1}^{M}w_mdt_a} -f') = 0$, either $\lambda_1 = 0$ or the first constraint is met with a slack. 

If the unconstrained optimizer of the total energy obtained by $\frac{dtEnergy}{df'} = 0$ meets the deadline constraint with a slack, i.e. the unconstrained optimizer $f'_{unconstrained} > \frac{\sum_{m=1}^{M}\pi_{m'}w_m}{t_{budget} - \sum_{m=1}^{M}w_mdt_a}$, $\lambda_2$ becomes zero. Otherwise, $\lambda_2 =  \frac{dtEnergy}{df'}$ at $f' = \frac{\sum_{m=1}^{M}cw_m}{t_{budget} - \sum_{m=1}^{M}w_mdt_a}$ as in Lemma~\ref{DynamicOptimizer}

Thus the optimal point is:
\begin{displaymath}
\begin{aligned}
&\lambda_1 = max(\frac{ddEnergy(f')}{df'}, 0) \\
&\text{where } f' =  \frac{\sum_{m=1}^{M}\pi_{m'}w_m}{t_{budget} - \sum_{m=1}^{M}w_mdt_a}  \text{ is the minimum allowed reference frequency by the deadline}\\
&\lambda_2 = 0\\
&f' = max(\text{Unconstrained optimizer of }  tEnergy(f'),  \frac{\sum_{m=1}^{M}\pi_{m'}w_m}{t_{budget} - \sum_{m=1}^{M}w_mdt_a})
\end{aligned}
\end{displaymath}
\end{proof}
\end{document}